\newlength{\lyxlabelwidth}      
\numberwithin{equation}{section}
\numberwithin{figure}{section}
\numberwithin{table}{section}
	\newenvironment{elabeling}[2][]%
	{\settowidth{\lyxlabelwidth}{#2}
		\begin{description}[font=\normalfont,style=sameline,
			leftmargin=\lyxlabelwidth,#1]}
	{\end{description}}
\theoremstyle{plain}
  \newtheorem{thm}{\protect\theoremname}
  \newtheorem{thm}{\protect\theoremname}[chapter]
\theoremstyle{plain}
  \newtheorem{lem}{\protect\lemmaname}
  \newtheorem{lem}{\protect\lemmaname}[chapter]
  \newenvironment{proof}[1][\proofname]{\par
    \normalfont\topsep6\p@\@plus6\p@\relax
    \trivlist
    \itemindent\parindent
    \item[\hskip\labelsep
          \scshape
      #1]\ignorespaces
  }{%
    \endtrivlist\@endpefalse
  }
  \providecommand{\proofname}{Proof}
\theoremstyle{definition}
  \newtheorem{example}{\protect\examplename}
  \newtheorem{example}{\protect\examplename}[chapter]
\providecommand{\examplename}{Example}
\providecommand{\lemmaname}{Lemma}
\providecommand{\theoremname}{Theorem}
\begin{document}
\title{The Over-and-Above Implementation of Reserve Policy in India\thanks{This paper subsumes and supersedes our earlier working paper titled
``\emph{Affirmative Action in India via Vertical, Horizontal, and
Overlapping Reservations: Comment.'' }We are grateful to\emph{ }Battal
Do\u{g}an, Isa Hafal\i r, and Bumin Yenmez for their feedback. We
acknowledge that we used ChatGPT to improve the readability and language
of the introduction section. }}
\date{December,  2023}
\author{Orhan Aygün\thanks{orhan.aygun@boun.edu.tr; University of Minnesota, Applied Economics
Department, 231 Ruttan Hall, 1994 Buford Ave, St. Paul, MN 55108,
USA; and Bo\u{g}aziçi University, Department of Economics, Natuk Birkan
Building, Bebek, Istanbul 34342, Turkey.} $\quad$and $\quad$Bertan Turhan\thanks{bertan@iastate.edu; Iowa State University, Department of Economics,
Heady Hall, 518 Farm House Lane, Ames, IA 50011, USA.}}
\maketitle
\begin{abstract}
The over-and-above choice rule is the prominent selection procedure
to implement affirmative action. In India, it is legally mandated
in allocating public school seats and government job positions. In
this paper, we present axiomatic characterization of the over-and-above
choice rule by rigorously stating policy goals as formal axioms. Moreover,
we characterize the deferred acceptance mechanism coupled with the
over-and-above choice rules for centralized marketplaces. \vfill{}
\end{abstract}
\begin{elabeling}{00.00.0000}
\item [{$\mathbf{Keywords:}$}] Market design, matching, reserve systems,
affirmative action. 
\end{elabeling}
$\mathbf{JEL\;Codes}$\emph{: }C78, D02, D47, D63, I21.\pagebreak{}

\section{Introduction}

The \textquotedbl\emph{Over-and-Above Choice Rule}\textquotedbl{}
has garnered significant attention in both theoretical constructs
of market design and practical applications for facilitating affirmative
action policies worldwide. When institutions earmark certain positions
for specific demographic groups, aiming to rectify socioeconomic disparities
in resource allocation, they prefer to allocate \emph{unreserved}
(or \emph{open-category}) positions before reserved category slots.
The foundational principle propounds that top-tier applicants from
reserved categories should first be considered for unreserved positions,
ensuring that some reserved category slots are allocated to candidates
who, in the absence of such affirmative action provisions, might not
secure admission. Notably, as Supreme Court decisions mandate it,
this rule has been operationalized in admissions to public universities
and the distribution of governmental positions in India.

Since its inception in 1950, India has been executing one of the most
extensive affirmative action programs globally to alleviate the historical
disparities faced by certain societal segments. This program is manifested
through a reservation system, entrenched in educational admissions
and government employment, governed by a robust legal framework, notably
influenced by Supreme Court rulings (Sönmez and Yenmez 2022; Aygün
and Turhan 2022). Under the federally mandated reserve policy, predefined
percentages of available positions in educational and governmental
institutions are set aside for applicants from specific socio-economic
groups. These percentages are distributed as 15\% for Scheduled Castes
(SC), 7.5\% for Scheduled Tribes (ST), 27\% for Other Backward Classes
(OBC), and 10\% for Economically Weaker Sections (EWS), collectively
termed \emph{vertical reserve categories}. Conversely, applicants
not falling under these classifications are categorized under the
\emph{General Category}. The policy mandates that individuals from
reserved categories must formally disclose their status to leverage
the benefits accorded by the program. The remainder, known as \emph{open-category}
(or \emph{unreserved}) positions, are accessible to all applicants,
including those from reserved categories. It is pertinent to note
that members of SC, ST, OBC, and EWS who opt not to declare their
affiliation to their respective reserved categories, alongside GC
members, are exclusively considered for open-category positions.

On top of the vertical reservations, so-called ``\emph{horizontal
reservations}\textquotedbl{} cut across all vertical categories in
allocating school seats and government job positions.\footnote{See the historic Supreme Court of India decision in Indra Sawhney
and others v. Union of India (1992) for the details of vertical and
horizontal reservations and their interactions. Sönmez and Yenmez
(2022) relate the mandates in this judgment to matching theory. } Horizontal reservations are typically provided for specific social
groups not defined by their caste or class but by other criteria,
such as gender (women) and physical disabilities. The key feature
of horizontal reservation is that the positions reserved under horizontal
reservation are filled first by eligible candidates from within each
vertical category.\footnote{Do\u{g}an and Y\i ld\i z (2023) call such choice functions ``\textbf{reserve-first}.\textquotedblright{}}
Horizontal reservations are implemented in each vertical category
via choice rules that fill horizontally reserved positions to eligible
applicants and then fill the rest of the positions by considering
all remaining applicants. For example, if there is a horizontal reservation
for women, women from each vertical category are first considered
to fill these reserved positions. Candidates from the respective vertical
category fill any unfilled seats in the horizontal quota. In other
words, the horizontal reservations are implemented as \emph{minimum
guarantees }(Sönmez and Yenmez, 2022).

The assignment of seats for college admissions and government positions
in India is fundamentally anchored in a \emph{merit-based} system.
Applicants in the same vertical reserve category are ranked based
on merit scores. This principle, called \emph{inter-se merit}, is
a key component of India's approach to balancing affirmative action
with the pursuit of excellence and meritocracy. As argued in Sönmez
and Yenmez (2022), the inter-se merit principle was mandated in the
historic Supreme Court of India (SCI) decision, \emph{Indra Sawhney
and others v. Union of India} (1992), and requires that given two
applicants who belong to the same category, if the one with a lower
merit-score is assigned a position, then the one with higher merit-score
must also be assigned a position. The Indian legal framework also
requires that when applicants from vertical reserve categories obtain
open-category positions, they are not counted against reservations
of their respective categories \emph{with the caveat that open-category
positions are to be allocated to higher-scoring applicants}. This
requirement is referred to as the \emph{over and above principle}.
Filling open-category positions before reserved category positions
in each institution achieves this principle. Moreover, in each vertical
category, all positions must be allocated as long as there are eligible
applicants for them. The following simple choice procedure, \emph{the
over-and-above choice rule,} has been used in India for decades. 
\begin{quotation}
\emph{First, applicants are selected for open-category positions one
at a time following the merit score ranking up to the capacity of
open-category. Then, applicants are selected for vertical reserve
categories following merit score ranking of applicants in each respective
category up to their capacities. }
\end{quotation}
In this paper, we characterize this widely used choice rule by rigorously
expressing the policy goals as formal axioms in the context of Indian
affirmative action. Furthermore, we characterize the celebrated deferred
acceptance (DA) algorithm of Gale and Shapley (1962) under the over-and-above
choice rules.\footnote{Do\u{g}an, Imamura, and Yenmez (2022) introduce a general theory for
the relationship between the characterizations of a choice rule and
the DA mechanism under this choice rule.}\footnote{Sönmez and Yenmez (2022) characterized a more complicated version
of the over-and-above-choice rule in a model with both vertical and
horizontal reservations. The authors present the characterization
of the ``\emph{Two-Step Meritorious Horizontal'' }choice rule with
four axioms: (1) maximal accommodation of HR protections, (2) no justified
envy, (3) non-wastefulness, and (4) compliance with VR protections.
Some of these axioms are already existing in the matching literature
and some of them are new ones that are proposed by the authors. In
our characterization, on the other hand, we only use policy goals
from the judgment in Indra Shawney (1992) by rigorously formalizing
them. } Sönmez and Yenmez (2022) attempted to ``\emph{verbally''} characterize
the over-and-above-choice rule in the absence of horizontal reservations
building on three principles from the Supreme Court judgment in Indra
Shawney (1992) without formally defining these principles as mathematical
axioms. In the appendix, we show that their characterization is erroneous,
given how they state the principles in writing. Our characterization
is based on principles obtained from the legal framework in India
as well. Given how prominent the over-and-above choice rule worldwide
is, its characterization is crucial to fully understand which principles
it implements. 

This paper contributes to the recently active market design literature
on Indian affirmative action. Echenique and Yenmez (2015) present
affirmative action in India as an example of controlled school choice.
Aygün and Turhan (2017) discuss issues in admissions to IITs. aswana
et al. (2018) designed and implemented a new joint seat allocation
process for the technical universities in India that has been implemented
since 2016. Aygün and Turhan (2020) formulate vertical reservations
and de-reservations in admissions to technical colleges in India and
introduce dynamic reserves choice rules as a possible solution. Aygün
and Turhan (2022) introduce the backward transfer choice rules and
the DA algorithm coupled with these rules, and show that it is the
unique desirable mechanism. Sönmez and Yenmez (2022) relates Indian
judiciary to matching theory and design a choice rule in a model with
vertical and horizontal reservations. 

This paper also contributes to the literature on reserve policies,
which started with the seminal papers of Hafal\i r, Yenmez, and Yildirim
(2013). Our paper contributes to the theoretical literature that studies
lexicographic choice functions. Important work in this strand of literature
include Chambers and Yenmez (2018), Do\u{g}an, Do\u{g}an and Yildiz
(2021), Do\u{g}an and Yildiz (2022), Avataneo and Turhan (2020), and
Aygün and Turhan (2020, 2022), among many others.

\section{The Preliminaries}

There is a finite set of institutions $\mathcal{S}=\left\{ s_{1},...,s_{m}\right\} $
and a finite set of individuals $\mathcal{I}=\left\{ i_{1},...,i_{n}\right\} $.
Institution $s$ has $\overline{q}_{s}$ positions. The vector $\left(q_{s}^{SC},q_{s}^{ST},q_{s}^{OBC},q_{s}^{EWS}\right)$
denotes the number of positions earmarked for SC, ST, OBC, and EWS
categories at institution $s$. We let $\mathcal{R}=\left\{ SC,ST,OBC,EWS\right\} $
to denote the set of reserve categories, and let $\mathcal{C}=\left\{ o,SC,ST,OBC,EWS\right\} $
to denote the set of all position categories, where $o$ denotes the
open-category. The number of open-category positions at institution
$s$ is $q_{s}^{o}=\overline{q}_{s}-q_{s}^{SC}-q_{s}^{ST}-q_{s}^{OBC}-q_{s}^{EWS}$.
The vector $q_{s}=\left(q_{s}^{o},q_{s}^{SC},q_{s}^{ST},q_{s}^{OBC},q_{s}^{EWS}\right)$
describes the \emph{distribution} of positions over categories at
institution $s$. 

The function $t:\mathcal{I}\rightarrow\mathcal{\mathcal{R}}\cup\left\{ GC\right\} $
denotes the category membership of individuals. For every individual
$i\in\mathcal{I}$, $t(i)$ denotes the category individual $i$ belongs
to. Reporting membership to SC, ST, and OBC is optional. Reserved
category members who do not report their membership are considered
GC applicants and eligible $\mathbf{only}$ for open-category positions.
Members of reserve category $r\in\mathcal{R}$ are eligible for both
open-category positions and reserved category-$r$ positions. We denote
a profile of reserved category membership by $T=\left(t_{i}\right)_{i\in\mathcal{I}}$,
and let $\mathcal{T}$ be the set of all possible reserved category
membership profiles. 

Merit scores induce strict\emph{ }meritorious ranking of individuals
at each institution $s$, denoted $\succ_{s}$, which is a linear
order over $\mathcal{I}\cup\{\emptyset\}$. $i\succ_{s}j$ means that
applicant $i$ has a higher priority (higher merit score) than applicant
$j$ at institution $s$. We write $i\succ_{s}\emptyset$ to say that
applicant $i$ is acceptable for institution $s$. Similarly, we write
$\emptyset\succ_{s}i$ to say that applicant $i$ is unacceptable
for institution $s$. The profile of institutions' priorities is denoted
$\succ=(\succ_{s_{1}},...,\succ_{s_{m}})$. 

For each institution $s\in\mathcal{S}$, the merit ordering for individuals
of type $r\in\mathcal{R}$, denoted by $\succ_{s}^{r}$, is obtained
from $\succ_{s}$ in a straightforward manner as follows: 
\begin{itemize}
\item for $i,j\in\mathcal{I}$ such that $t_{i}=r$, $t_{j}\neq r$, $i\succ_{s}\emptyset$,
and $j\succ_{s}\emptyset$, we have $i\succ_{s}^{r}\emptyset\succ_{s}^{r}j$,
where $\emptyset\succ_{t}^{r}j$ means individual $j$ is unacceptable
for category $r$ at institution $s$. 
\item for any other $i,j\in\mathcal{I}$, $i\succ_{s}^{r}j$ if and only
if $i\succ_{s}j$. 
\end{itemize}
For each institution $s\in\mathcal{S}$, its selection criterion is
summarized by a choice rule $C_{s}$. A choice rule $C_{s}$ simply
selects a subset from any given set of individuals. That is, for any
given $A\subseteq\mathcal{I}$, $C_{s}\left(A\right)\subseteq A$. 

\section{Characterization of the Over-and-above Choice Rule}

We now define three axioms that are legally required principles for\emph{
resource allocation problems in India in the absence of de-reservations}.
These axioms are concerned with institutions' choice functions. Given
a set of applicants $A\subseteq\mathcal{I}$, let $rank_{A}(i)$ be
the rank of applicant $i$ in set $A$ with respect to merit-ranking
$\succ_{s}$. That is, $rank_{A}(i)=k$ if and only if $\mid\left\{ j\in A\mid j\succ_{s}i\right\} \mid=k-1$.
Let $A^{r}=\left\{ i\in A\mid i\succ_{s}^{r}\emptyset\right\} $ be
the set of category $r\in\mathcal{R}$ eligible individuals in set
$A$. 

\paragraph{Over-and-above principle. }

Each individual $i\in A$ with $rank_{A}(i)\leq q_{s}^{o}$ must be
assigned to an open-category position. 

This property ensures that open-category positions obtained by reserve
category members are not counted against their reservations. By allocating
open-category positions to highest-scoring applicants, reserve category
positions can be given to members of reserve categories who would
not be able to receive these positions in the absence of reservation
policy. Moreover, over-and-above principle guarantees that cutoff
score of open-category is higher than cutoff scores of reserve categories.
Higher cut-off scores for the open category positions is a crucial
requirement in admissions to technical colleges in India. Baswana
et al. (2018) report that the authorities requested this property
to be satisfied. 

\paragraph{Within-category fairness.}

Given two individuals $i,j\in A$ such that $t_{i}=t_{j}$ and $i\succ_{s}j$,
if $j$ is assigned a position, then $i$ must also be assigned a
position. 

Within-category fairness requires that merit scores of applicants
are respected in each reserve category. That is, if a lower-scoring
applicant receives a position from category $c$, then a higher-scoring
applicant with the same category membership must receive a position
as well. 

\paragraph{Quota-filling subject to eligibility. }

If individual $i$ with $t_{i}=r$ is unassigned, then the number
of individuals who are matched to a category $r$ position must be
equal to $q_{s}^{r}$, for all $r\in\mathcal{R}$. 

We now give the formal description of the over-and-above choice rule. 

\subsubsection*{Over-and-Above Choice Rule $C_{s}^{OA}$}

Given an initial distribution of positions $q_{s}=\left(q_{s}^{o},q_{s}^{SC},q_{s}^{ST},q_{s}^{OBC},q_{s}^{EWS}\right)$,
a set of applicants $A\subseteq\mathcal{I}$, and a profile reserve
category membership $T=\left(t_{i}\right)_{i\in A}$ for the members
of $A$, the set of chosen applicants $C_{s}^{OA}(A,q_{s})$, is computed
as follows:

\paragraph{Stage 1. }

For open-category positions, individuals are selected following $\succ_{s}$
up to the capacity $q_{s}^{o}$. Let $C_{s}^{o}\left(A,q_{s}^{o}\right)$
be the set of chosen applicants. 

\paragraph{Stage 2. }

Among the remaining applicants $A^{'}=A\setminus C_{s}^{o}\left(A,q_{s}^{o}\right)$,
for each reserve category $t\in\mathcal{R}$, applicants are chosen
following $\succ_{s}^{t}$ up to the capacity $q_{s}^{t}$. Let $C_{s}^{t}\left(A^{'},q_{s}^{t}\right)$
be the set of chosen applicants for reserve category $t$.

Then, $C_{s}^{OA}(A,q_{s})$ is defined as the union of the set of
applicants chosen in stages 1 and 2. That is, 
\[
C_{s}^{OA}(A,q_{s})=C_{s}^{o}\left(A,q_{s}^{o}\right)\cup\underset{t\in\mathcal{R}}{\bigcup}C_{s}^{t}\left(A^{'},q_{s}^{t}\right).
\]

\begin{thm}
A choice rule satisfies the within-group fairness, over-and-above
principle, and capacity-filling subject to eligibility principles
if, and only if, it is the over-and-above choice rule $C_{s}^{OA}$. 
\end{thm}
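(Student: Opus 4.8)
The plan is to prove the two implications separately. Throughout I will assume, without loss of generality, that every applicant in $A$ is acceptable at $s$, since an unacceptable applicant is selected by neither a rule satisfying the three axioms nor by $C_{s}^{OA}$ and may be deleted from $A$ without affecting either side. I also read a choice rule $C_{s}$ as carrying the induced assignment of its selected applicants to position categories, so that the three axioms---which all speak of applicants being ``assigned'' to open or to reserve positions---can be applied; the capacities $q_{s}^{o}$ and $(q_{s}^{r})_{r\in\mathcal{R}}$ bound how many applicants any rule may place in each category.

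For the direction that $C_{s}^{OA}$ satisfies the axioms, each check follows the stage structure of the rule. The over-and-above principle holds because Stage 1 selects precisely the top $q_{s}^{o}$ applicants of $A$ under $\succ_{s}$ for the open category, which are exactly the applicants with $rank_{A}(i)\le q_{s}^{o}$. Within-category fairness follows by splitting on where the lower-ranked applicant $j$ is placed: if $j$ is open then $i\succ_{s}j$ forces $rank_{A}(i)<rank_{A}(j)\le q_{s}^{o}$, so $i$ is open as well; if $j$ is in reserve category $r=t_{i}=t_{j}$ then $i\succ_{s}^{r}j$ places $i$ above $j$ in $(A')^{r}$, so $i$ is selected in Stage 2 whenever $j$ is (or $i$ was already taken in Stage 1). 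Quota-filling holds because, if an eligible type-$r$ applicant is passed over in Stage 2, then at least $q_{s}^{r}$ type-$r$ applicants of $A'$ outrank it, so Stage 2 fills category $r$ to capacity.

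For the converse, let $C_{s}$ satisfy the three axioms; I will show its selection coincides with $C_{s}^{OA}$ category by category. First I pin down the open positions: the over-and-above principle forces every applicant with $rank_{A}(i)\le q_{s}^{o}$ into an open position, and since there are only $q_{s}^{o}$ open positions this accounts for all of them, so the open selectees of $C_{s}$ are exactly $C_{s}^{o}(A,q_{s}^{o})$ and the remaining applicants are $A'=A\setminus C_{s}^{o}(A,q_{s}^{o})$, matching Stage 1. Next, for each $r\in\mathcal{R}$, within-category fairness forces the selected type-$r$ applicants to form a top segment of the type-$r$ applicants under $\succ_{s}$; removing those already placed in open leaves a top segment of $(A')^{r}$ occupying reserve-$r$ positions. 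It remains to show this segment has size $\min\{q_{s}^{r},|(A')^{r}|\}$: the capacity bound gives ``$\le q_{s}^{r}$,'' while quota-filling supplies the matching lower bound---if the segment were strictly smaller than both $q_{s}^{r}$ and $|(A')^{r}|$, the first unselected eligible type-$r$ applicant would be unassigned while fewer than $q_{s}^{r}$ reserve-$r$ seats are used, contradicting the axiom. Hence the reserve-$r$ selectees are the top $\min\{q_{s}^{r},|(A')^{r}|\}$ applicants of $(A')^{r}$, that is $C_{s}^{t}(A',q_{s}^{t})$, and taking the union over the open category and all $r$ gives $C_{s}(A)=C_{s}^{OA}(A,q_{s})$.

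The main obstacle is the converse, and within it the coordination between the two axioms that govern the reserve stage. Within-category fairness only delivers the qualitative ``top-segment'' shape of each reserve category, and the open-position capacity argument is what guarantees these segments are drawn from $A'$ rather than still competing for open seats; the quantitative claim that each category is filled to exactly $\min\{q_{s}^{r},|(A')^{r}|\}$ rests entirely on reading quota-filling contrapositively, and must be argued separately in the two regimes $|(A')^{r}|\le q_{s}^{r}$ and $|(A')^{r}|>q_{s}^{r}$. Care is also needed so that the acceptability and eligibility reductions do not interact badly with the purely rank-based statement of the over-and-above principle.
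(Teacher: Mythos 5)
Your proposal is correct and takes essentially the same route as the paper's own proof: verify the three axioms by following the stage structure of $C_{s}^{OA}$, then for the converse use the over-and-above principle to pin down the open-category selection exactly, and combine within-category fairness with quota-filling (subject to the capacity bound) to show each reserve category $r$ selects precisely its top eligible segment from $A\setminus C_{s}^{o}\left(A,q_{s}^{o}\right)$. Your write-up merely fills in details the paper leaves implicit, such as the top-segment argument, the $\min\left\{ q_{s}^{r},\mid\left(A'\right)^{r}\mid\right\}$ size count, and the reduction to acceptable applicants.
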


\section{Deferred Acceptance Mechanism with Over-and-Above Choice Rules}

Admissions to technical universities is a centralized marketplace
with multiple institutions. An outcome in a centralized marketplace
is a matching. A \textbf{matching }$\mu$ is a function $\mu:\mathcal{I}\cup\mathcal{S}\rightarrow2^{\mathcal{I}}\cup\mathcal{S}\cup\left\{ \oslash\right\} $
such that 
\begin{itemize}
\item for any individual $i\in\mathcal{I}$, $\mu\left(i\right)\in\mathcal{S}\cup\{\oslash\}$, 
\item for any institution $s\in\mathcal{S}$, $\mu\left(s\right)\in2^{\mathcal{I}}$
such that $\mid\mu\left(s\right)\mid\leq\overline{q}_{s}$, 
\item for any individual $i\in\mathcal{I}$ and institution $s\in\mathcal{S}$,
$\mu\left(i\right)=s$ if and only if $i\in\mu\left(s\right)$. 
\end{itemize}
A matching specifies, for every institution, the set of individuals
assigned to that institution, but it does not specify categories under
which individuals are assigned. Associated with a matching is an \emph{$\mathbf{assignment}$}
which specifies a category each individual is accepted under in each
institution.\emph{ }Each individual's assignment is a $\mathbf{pair}$
of institution and category, and each institution's assignment is
a set of individual-category pairs. 

An \textbf{assignment }is a function $\eta:\;\mathcal{I}\cup\mathcal{S}\rightarrow\left(2^{\mathcal{I}}\cup\mathcal{S}\right)\times\mathcal{C}\bigcup\left\{ \oslash\right\} $
such that 
\begin{enumerate}
\item for each $i\in\mathcal{I}$, 
\[
\begin{cases}
\begin{array}{c}
\eta(i)\in\mathcal{S}\times\left\{ o\right\} \bigcup\left\{ \oslash\right\} \\
\eta(i)\in\mathcal{S}\times\left\{ t(i),o\right\} \bigcup\left\{ \oslash\right\} 
\end{array} & \begin{array}{c}
if\;t_{i}=GC,\\
if\;t_{i}\in\mathcal{R},
\end{array}\end{cases}
\]
\item for every $s\in\mathcal{S}$, $\eta(s)\subseteq2^{\mathcal{I}\times\mathcal{C}}$
such that $\mid\eta(s)\mid\leq\overline{q}_{s}$, and for all $r\in\mathcal{R}$,
\[
\mid\left\{ j\mid\left(j,r\right)\in\eta(s)\right\} \mid\leq q_{s}^{r},
\]
\item for every $\left(i,s\right)\in\mathcal{I}\times\mathcal{S}$, $\eta(i)=\left(s,c\right)$
if and only if $\left(i,c\right)\in\eta(i)$. 
\end{enumerate}
Let $\mu\left(\eta\right)$ be the matching induced by the assignment
$\eta$ and $\mu_{i}\left(\eta\right)$ be the institution that individual
$i$ is matched with. Similarly, $\mu_{s}\left(\eta\right)$ denotes
the set of individuals who are matched with institution $s$. Given
an assignment $\eta$, the matching $\mu\left(\eta\right)$ induced
by it is obtained as follows: 
\begin{itemize}
\item $\mu_{i}\left(\eta\right)=s$ if and only if $\eta(i)=\left(s,c\right)$
for some $c\in\mathcal{C}$, and 
\item $\mu_{s}(\eta)=\left\{ i\in\mathcal{I}\mid\left(i,c\right)\in\eta\left(s\right)\quad for\;some\quad c\in\mathcal{C}\right\} $.
\end{itemize}

\subsubsection*{Mechanisms }

A \textbf{mechanism}\emph{ }is a systematic way to map preference
and reserve category membership profiles of individuals to assignments,
given institutions' choice procedures. Technically, a mechanism $\varphi$
is a function $\varphi:\mathcal{P}\times\mathcal{T}\rightarrow\mathcal{M}$,
where $\mathcal{M}$ denotes the set of all assignments, given a profile
of institutional choice rules $\mathbf{C}=\left(C_{s}\right)_{s\in\mathcal{S}}$.
Note that the outcome of a mechanism is an assignment, not a matching.
It is an important modeling aspect because, in India, the outcomes
are announced as institution-category pairs for individuals. 

A mechanism is \textbf{incentive-compatible} if reporting the\emph{
true preference} and \emph{true reserve category membership} pair
is a weakly dominant strategy for each individual. Mathematically,
a mechanism $\mathcal{\varphi}$ is\emph{ }\textbf{incentive compatible}\emph{
}if for every profile $(P,T)\in\mathcal{P}\times\mathcal{T}$, and
for each individual $i\in\mathcal{I}$, there is no $\left(\widetilde{P}_{i},\widetilde{t}_{i}\right)$
such that 
\[
\mu_{i}\left[\varphi\left(\left(\widetilde{P}_{i},\widetilde{t}_{i}\right),\left(P_{-i},T_{-i}\right)\right)\right]P_{i}\mu_{i}\left[\varphi\left(P,T\right)\right].
\]

An assignment $\eta$ is \textbf{individually rational} if, for every
individual $i\in\mathcal{I}$, $\mu_{i}\left(\eta\right)R_{i}\emptyset.$
A mechanism $\varphi$ is if $\varphi\left(P,T\right)$ is\textbf{
individually rational} for any profile $\left(P,T\right)\in\mathcal{P}\times\mathcal{T}$. 

An assignment $\eta$ satisfies \textbf{within-category fairness}
if for every pair $\left(i,s\right)$ such that $sP_{i}\mu(i)$, for
all $j$ such that $\left(j,o\right)\in\eta\left(s\right)$ or $\left(j,t(i)\right)\in\eta(s)$
where $t(i)\in\mathcal{R}$,  we have $j\succ_{s}i$. A mechanism
$\varphi$ satisfies \textbf{within-category fairness} if $\varphi\left(P,T\right)$
satisfies within-category fairness for any profile $\left(P,T\right)\in\mathcal{P}\times\mathcal{T}$. 

An assignment $\eta$ is \textbf{non-wasteful} if for every pair $\left(i,s\right)$
such that $sP_{i}\mu(i)$, all positions $i$ is eligible at $s$
must be exhausted. That is, $\mid\left\{ j\mid\left(j,t(i)\right)\in\eta(s)\right\} \mid=q_{s}^{t(i)},$
for all $t(i)\in\mathcal{R}$. A mechanism $\varphi$ is \textbf{non-wasteful}
if $\varphi\left(P,T\right)$ is non-wasteful for any profile $\left(P,T\right)\in\mathcal{P}\times\mathcal{T}$. 

An assignment $\eta$ satisfies \textbf{over-and-above principle}
if, for all $i,j\in\mathcal{I}$ such that $\left(i,o\right)\in\eta(s)$
and $(j,r)\in\eta(s)$ where $r\in\mathcal{R}$, we have $i\succ_{s}j$.
A mechanism $\varphi$ satisfies the \textbf{over-and-above principle}
if $\varphi\left(P,T\right)$ satisfies open-first for any profile
$\left(P,T\right)\in\mathcal{P}\times\mathcal{T}$. 

The over-and-above principle aims to allocate open-category positions
to high-scoring applicants. Under every assignment that satisfies
this principle, open-category cutoff score is higher than the cutoff
scores of reserve categories. 

\subsubsection*{DA Mechanism under the Over-and-above Choice Rules (DA-OA)}

Let $P=\left(P_{i}\right)_{i\in\mathcal{I}}$ be the vector of the\emph{
reported} preference relations and $T=\left(t_{i}\right)_{i\in\mathcal{I}}$
be the  vector of reported reserve category membership of individuals.
Given institutions' priority rankings $\succ=\left(\succ_{s}\right)_{s\in\mathcal{S}}$
and the profile $\mathbf{q}=\left(q_{s}\right)_{s\in\mathcal{S}}$---therefore,
given $\left(C_{s}^{OA}\right)_{s\in\mathcal{S}}$---the outcome
of the DA-OA mechanism is computed as follows: 

\paragraph{Step 1. }

Each individual applies to his top choice institution. Let $\mathcal{A}_{s}^{1}$
be the set of individuals who apply to institution $s$, for each
$s\in\mathcal{S}$. Each institution $s$ holds  applicants in $C_{s}^{OA}\left(\mathcal{A}_{s}^{1},q_{s}\right)$
and rejects the rest. 

\paragraph{Step n$\protect\geq2$.}

Each individual who was rejected in the previous step applies to the
best institution that has not rejected him. Let $\mathcal{A}_{s}^{n}$
be the union of the set of individuals who were tentatively held by
institution $s$ at the end of Step $n-1$ and the set of new proposers
of $s$ in Step $n$. Each institution $s\in\mathcal{S}$ tentatively
holds individuals in $C_{s}^{OA}\left(\mathcal{A}_{s}^{n},q_{s}\right)$
and rejects the rest. 

The algorithm terminates when there are no rejections. 

We now present our characterization result for the DA-OA mechanism. 
\begin{thm}
Let $\left(\succ_{s}\right)_{s\in\mathcal{S}}$ be a profile of institutions'
priority orders. A mechanism $\varphi$ is individual rational, within-category
fair, non-wasteful, incentive-compatible, and satisfies the over-and-above
principle if, and only if, $\varphi$ is the DA-OA mechanism. 
\end{thm}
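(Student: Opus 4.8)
The plan is to prove the two implications separately, reducing everything to \emph{stability with respect to the profile} $(C_s^{OA})_{s\in\mathcal{S}}$: an assignment $\eta$ is stable if it is individually rational and admits no blocking pair $(i,s)$, where $(i,s)$ blocks when $sP_i\mu_i(\eta)$ and $i\in C_s^{OA}(\mu_s(\eta)\cup\{i\},q_s)$. The first preparatory step is to record that $C_s^{OA}$ is a slot-specific priority choice rule (process the $q_s^o$ open slots under $\succ_s$, then the $q_s^r$ slots of each $r\in\mathcal{R}$ under $\succ_s^r$ among the remaining applicants), and hence is \emph{substitutable}, \emph{path-independent}, and satisfies the \emph{law of aggregate demand}. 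With these properties in hand, the applicant-optimal stable assignment exists and is exactly the output of the applicant-proposing deferred acceptance algorithm, i.e. of DA-OA; this is the fact I will lean on throughout.

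For the ``only if'' direction I would verify that DA-OA satisfies all five axioms. Individual rationality, within-category fairness, non-wastefulness, and the over-and-above principle all follow from stability of the DA-OA outcome together with the two-stage structure of $C_s^{OA}$: at the terminal set $\mu_s$ held by $s$, path-independence gives $C_s^{OA}(\mu_s,q_s)=\mu_s$, the open slots are filled by the top $\min(q_s^o,|\mu_s|)$ applicants under $\succ_s$, and the reserve slots by eligible applicants, which is precisely what the three assignment-level axioms demand. Incentive compatibility splits into two pieces. Strategy-proofness in the \emph{preference} report is the standard consequence of substitutability and the law of aggregate demand for applicant-proposing DA. Strategy-proofness in the \emph{category} report is the genuinely new piece: I would show that a reserve-category applicant can never gain by concealing membership (reporting $GC$), since doing so only shrinks the applicant's eligibility---from open-and-$r$ slots down to open slots alone---and a monotonicity argument then shows that a less eligible applicant receives a weakly worse assignment under DA-OA.

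For the ``if'' direction, fix an arbitrary type profile $T$ and suppose $\varphi$ satisfies the five axioms. The key lemma is that, at fixed $T$, individual rationality, within-category fairness, non-wastefulness, and the over-and-above principle together are equivalent to stability of $\varphi(P,T)$ with respect to $(C_s^{OA})$, \emph{and} force the category labels at each $s$ to coincide with those produced by $C_s^{OA}(\mu_s,q_s)$. The labeling claim is exactly where Theorem 1 does its work: restricted to the assignees of a single institution, the three assignment-level axioms are the pointwise translations of the three choice-level axioms, so by the uniqueness in Theorem 1 the induced selection at $s$ must be $C_s^{OA}$---concretely, over-and-above forces the open-labeled applicants to be a top-set under $\succ_s$, every $GC$ assignee must be open-labeled (being ineligible for any reserve slot), and every remaining reserve-category assignee has a single eligible label, so no alternative labeling survives. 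Having established stability, I would invoke incentive compatibility---only its preference dimension is needed here---together with the classical uniqueness result that a stable mechanism which is strategy-proof for applicants must coincide with the applicant-optimal stable mechanism (proved by the usual manipulation argument against any applicant who strictly prefers the DA-OA outcome to $\varphi(P,T)$). This yields $\mu(\varphi(P,T))=\mu(\mathrm{DA}\text{-}\mathrm{OA}(P,T))$, and the labeling claim upgrades the matching equality to assignment equality. Since $T$ was arbitrary, $\varphi=\mathrm{DA}\text{-}\mathrm{OA}$.

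I expect the main obstacle to be the hard-direction lemma, in two respects. First, one must show that the three assignment-level axioms genuinely pin down the \emph{category labels}, not merely the underlying matching; this is the step that must be threaded through Theorem 1 and checked carefully for $GC$ applicants and for institutions where open capacity binds. Second, the category dimension of incentive compatibility in the easy direction has no off-the-shelf analogue in the standard DA toolkit, so the monotonicity-under-expanded-eligibility argument will need to be made fully rigorous rather than cited. The preference-only parts---substitutability, the law of aggregate demand, applicant-optimality, and the stable-plus-strategy-proof uniqueness argument---are then routine applications of known matching theory once $C_s^{OA}$ is identified as a slot-specific priority rule.
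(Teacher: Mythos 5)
Your overall architecture is the same as the paper's: verify the axioms for DA-OA (splitting incentive compatibility into a preference part, handled by substitutability plus the law of aggregate demand, and a category part), then for uniqueness reduce the four assignment-level axioms to stability with respect to $\left(C_{s}^{OA}\right)_{s\in\mathcal{S}}$ and invoke a ``stable plus strategy-proof equals DA'' result. The differences are in what is proved versus cited: the paper outsources category-misreport immunity to Theorem 3 of Ayg\"un and Turhan (2022), the stability lemma to Lemma 2 of that paper, and the final uniqueness step to Theorem 4 of Hatfield, Kominers, and Westkamp (2021), whereas you propose direct arguments for all three. Your truncation-style uniqueness argument does go through under substitutability and the law of aggregate demand (the rural-hospitals property is what makes it work), but be aware that the ``monotonicity-under-expanded-eligibility'' claim for category misreports is a genuine respecting-improvements theorem, not a routine check --- it is precisely the content of the cited result.

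The genuine gap is your labeling claim in the hard direction. You assert that the assignment-level axioms force the category labels at each institution to coincide with those produced by $C_{s}^{OA}\left(\mu_{s},q_{s}\right)$, so that ``no alternative labeling survives.'' This is false. Take a single institution $s$ with $q_{s}^{o}=2$ and $q_{s}^{r}=1$, and two applicants $a,b$, both members of category $r$ with $a\succ_{s}b$, both ranking $s$ as their only acceptable choice. DA-OA labels both as open: $\eta(s)=\left\{ (a,o),(b,o)\right\} $. The alternative labeling $\eta'(s)=\left\{ (a,o),(b,r)\right\} $ satisfies every axiom: the over-and-above principle holds because $a\succ_{s}b$; within-category fairness and non-wastefulness are vacuous because no applicant prefers $s$ to her match; and incentive compatibility is untouched because the underlying matching --- which is all the IC axiom sees --- is unchanged. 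So a reserve-category assignee sitting below the open-labeled top set does not have ``a single eligible label'': she can carry either label whenever open capacity is slack and there is no rejected demand at $s$. Consequently your upgrade from matching equality to assignment equality cannot be completed, and in fact the example shows no argument can deliver assignment-level uniqueness from these axioms: a mechanism equal to DA-OA everywhere except for the relabeling above satisfies all five axioms yet differs from DA-OA as an assignment-valued map. The defensible conclusion --- and the one the paper's own proof actually delivers, since its final step (HKW's Theorem 4) operates at the matching level and never touches labels --- is $\mu\left(\varphi\left(P,T\right)\right)=\mu\left(\mathrm{DA\text{-}OA}\left(P,T\right)\right)$ for every profile. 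Your instinct that the matching-to-assignment step is a real issue was correct; the resolution is not to prove the labeling claim but to recognize that the characterization holds only at the matching level.
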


\section{APPENDIX }

\paragraph{Proof of Theorem 1. }

By definition of $C_{s}^{OA}$, the first $q_{s}^{o}$ top-ranked
applicants are assigned open-category positions. If there are fewer
than $q_{s}^{o}$ applicants, all of them are assigned open-category
positions. Thus, the over-and-above principle is satisfied. Because
each category fills its positions following the merit score rankings,
the within-category fairness is trivially satisfied. The only way
for an individual $i$ to be unassigned is that positions that $i$
is eligible for are filled with higher merit score individuals. Thus,
the capacity filling subject to eligibility is also satisfied.

Let $C$ be a choice rule that satisfies the over-and-above principle,
within-category fairness, and capacity-filling subject to eligibility.
By the over-and-above principle of $C$, we have $C^{o}\left(A\right)=C^{OA}\left(A\right)$.
If $\mid A\mid\leq q^{o}$, then we are done. Consider the case where
$\mid A\mid>q^{o}$. Capacity-filling subject to eligibility requires
that if an individual $i$ with $t\left(i\right)\in\mathcal{R}$ is
rejected, then $\mid C^{t\left(i\right)}\mid=q_{s}^{t\left(i\right)}$.
Capacity-filling subject to eligibility and within-category fairness
jointly imply that $C^{v}$ chooses the top $q_{s}^{v}$ individuals
from the set $\left(A\setminus C^{o}\left(A\right)\right)\bigcap A^{v}$,
for all $v\in\mathcal{R}$. Thus, $C\left(A\right)=C^{OA}\left(A\right)$. 

\paragraph{Proof of Theorem 2. }

We first show that the DA-OA mechanism satisfies all the axioms. Let
$\succ\equiv\left(\succ_{s}\right)_{s\in\mathcal{S}}$ and $C^{OA}=\left(C_{s}^{OA}\right)$
be the profile of institutions' priorities and over-and-above choice
rules. 

\paragraph{Individual rationality. }

In the course of the DA-OA algorithm, no individual $i\in\mathcal{I}$
proposes to an unacceptable institution. Thus, the DA-OA mechanism
is individually rational. 

\paragraph{Within-category Fairness. }

Given $\left(P,T\right)\in\mathcal{P}\times\mathcal{T}$, let $\eta$
be the DA-OA outcome and $\mu$ be the matching induced by $\eta$.
Consider $\left(i,s\right)\in\mathcal{I}\times\mathcal{S}$ such that
$sP_{i}\mu(i)$. Because $i$ is not matched with $s$, $i$ must
have applied to $s$ and got rejected during the DA-OA algorithm.
Let $n$ be the step at which $i$ gets rejected and $\mathcal{A}_{s}^{n}$
be the union of the set of individuals who were tentatively held by
$s$ at the end of step $\left(n-1\right)$ and the set of new proposers
to $s$ in step $n$. We have $i\in\mathcal{A}_{s}^{n}$ and $i\notin C_{s}^{OA}\left(\mathcal{A}_{s}^{n},q_{s}\right)$.
Because $i\notin C_{s}^{OA}\left(\mathcal{A}_{s}^{n},q_{s}\right)$,
for all $\left(j,o\right)\in\eta(s)$, we have $j\succ_{s}i$. If
$t(i)\in\mathcal{R}$, then for all $\left(j,t(i)\right)\in\eta(s)$
, we have $j\succ_{s}i$, by the definition of $C_{s}^{OA}$. Hence,
the DA-OA mechanism is within-category fair. 

\paragraph{Non-wastefulness. }

In the DA-OA algorithm, $i$ is rejected from open-category at $s$
if $q_{s}^{o}$ positions are exhausted in $C_{s}^{OA}$ in the step
of the DA-OA in which $i$ is rejected. Similarly, $i$ with $t\left(i\right)=r$
is rejected from category-$r$ at $s$ if she is rejected from open-category
in $C_{s}^{OA}$ and $q_{s}^{t(i)}$ category-$r$ positions are exhausted
in the iteration of $C_{s}^{OA}$, at the step of the DA in which
$i$ got rejected. Thus, the DA-OA mechanism is non-wasteful. 

\paragraph{Over-and-above Principle. }

Given $\left(P,T\right)\in\mathcal{P}\times\mathcal{T}$, let $n$
be the \emph{last }step of the DA-OA algorithm. Let $\mathcal{A}_{s}^{n}$
be the union of the set of individuals who were tentatively held by
institution $s$ at the end of step $\left(n-1\right)$ and the set
of new proposers to $s$ in step $n$. By definition of $C_{s}^{OA}$
choice rule, $C_{s}^{OA}\left(\mathcal{A}_{s}^{n},q_{s}\right)$ assigns
the top $q_{s}^{o}$ candidates to the open-category if $\mid\mathcal{A}_{s}^{n}\mid\geq q_{s}^{o}$.
Otherwise, it assigns all applicants to the open-category. Hence,
DA-OA satisfies the over-and-above principle. 

\paragraph{Incentive-compatibility. }

(i) We first prove that for a given $T=\left(t_{i}\right)_{i\in I}$,
the DA-OA cannot be manipulated via preference misreporting by showing
that $C_{s}^{OA}$ satisfies substitutability and size monotonicity.

\paragraph{Substitutability. }

Consider $i,j\in\mathcal{I}$ and $A\subset\mathcal{I}\setminus\{i,j\}$
such that $i\notin C_{s}^{BT}\left(A\cup\left\{ i\right\} \right)$.
We need to show that $i\notin C_{s}^{OA}\left(A\cup\left\{ i,j\right\} \right)$.
If $i\notin C_{s}^{OA}\left(A\cup\left\{ i\right\} ,q_{s}\right)$,
then $i$ is not in the top $q_{s}^{o}$ in the set $A\cup\{i\}$.
This implies that $i$ is not in top $q_{s}^{o}$ in the set $A\cup\{i,j\}$.
So, $i$ cannot be chosen for an open-category position from $A\cup\{i,j\}$. 

We now show that $i$ cannot be chosen for a reserve category $t_{i}\in\mathcal{R}$
position. First, suppose that $t_{i}=r$. Since $i$ was not chosen
for a category-r position from the set $A\cup\{i\}$, and when $j$
is added to the set $A\cup\{i\}$, $i$ cannot be chosen for a category-r
position because adding $j$ (weakly) increases the competition for
category-r positions. 

\paragraph{Size monotonicity. }

Consider $i\in\mathcal{I}$ and $A\subseteq\mathcal{I}\setminus\left\{ i\right\} $.
We need to show that $\mid C_{s}^{OA}\left(A\right)\mid\leq\mid C_{s}^{OA}\left(A\cup\left\{ i\right\} \right)\mid$.
If $i\notin C_{s}^{OA}\left(A\cup\left\{ i\right\} \right)$, then
we have $C_{s}^{OA}\left(A\cup\left\{ i\right\} \right)=C_{s}^{OA}\left(A\right)$
by the definition of $C_{s}^{OA}$. If $i\in C_{s}^{OA}\left(A\cup\left\{ i\right\} \right)$,
then either $C_{s}^{OA}\left(A\cup\left\{ i\right\} \right)=C_{s}^{OA}\left(A\right)\cup\left\{ i\right\} $
or $C_{s}^{OA}\left(A\cup\left\{ i\right\} \right)=C_{s}^{OA}\left(A\right)\cup\left\{ i\right\} \setminus\left\{ j\right\} $
for some $j\in C_{s}^{OA}\left(A\right)$ since each category fills
its quota subject to eligibility based on merit scores. 

Substitutability and size monotonicity of $C_{s}^{OA}$ for each $s$
imply strategy-proofness of the DA-OA by Hatfield and Milgrom (2005).
Thus, DA-OA cannot be manipulated via preference misreporting. 

\paragraph{(ii) }

Next, we prove that, for a given preference profile $P=\left(P_{i}\right)_{i\in I}$,
DA-OA cannot be manipulated by hiding reserve category membership.
This is implied by Theorem 3 of Aygün and Turhan (2022). 

From (i), we have $\mu_{i}[\varphi\left(P,T\right)]R_{i}\mu_{i}[\varphi\left((\widetilde{P}_{i},P_{-i}),T\right)].$
From part (ii), we obtain 
\[
\mu_{i}[\varphi\left((\widetilde{P}_{i},P_{-i}),T\right)]R_{i}\mu_{i}[\varphi\left((\widetilde{P}_{i},P_{-i}),(\widetilde{t}_{i},T_{-i})\right)].
\]
The transitivity of the relation $R_{i}$ gives $\mu_{i}[\varphi\left(P,T\right)]R_{i}\mu_{i}[\varphi\left((\widetilde{P}_{i},P_{-i}),(\widetilde{t}_{i},T_{-i})\right)].$

To show uniqueness, we first prove the following lemma. 
\begin{lem}
Let $\varphi$ be a mechanism that is individually rational, non-wasteful,
within-group fair, and satisfies the over-and-above principle. Let
$\left(P,T\right)\in\mathcal{P}\times\mathcal{T}$, $\varphi\left(P,T\right)=\eta$,
and $\mu=\mu\left(\eta\right)$. Then, $\mu$ is \emph{stable} with
respect to the profile $\mathbf{C}^{OA}=\left(C_{s}^{OA}\right)_{s\in\mathcal{S}}$.
That is, 

(1) for every individual $i\in\mathcal{I}$, $\mu_{i}\left(\eta\right)R\emptyset$, 

(2) for every institution $s\in\mathcal{S}$, $C_{s}^{OA}(\mu_{s},q_{s})=\mu_{s}$,
and 

(3) there is no $(i,s)$ such that $sP_{i}\mu_{i}$ and $i\in C_{s}^{OA}(\mu_{s}\cup\{i\},q_{s})$.
\end{lem}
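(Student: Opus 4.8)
The plan is to verify the three defining conditions of stability one at a time, since each draws on a different subset of the hypotheses. Condition (1) is immediate: it is exactly the requirement $\mu_i(\eta)\,R_i\,\emptyset$, which holds because $\varphi$ is individually rational. So the real work is in (2) and (3), and the engine throughout is the over-and-above principle, which forces the open-category assignees at each $s$ to form a top segment of $\mu_s$ under $\succ_s$. Writing $O=\{j:(j,o)\in\eta(s)\}$ for the open assignees and noting $|O|\le q_s^o$ by feasibility of the assignment, the condition ``$(i,o)\in\eta(s)$ and $(j,r)\in\eta(s)$ with $r\in\mathcal{R}$ imply $i\succ_s j$'' says precisely that $O$ consists of the top $|O|$ individuals of $\mu_s$.

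For condition (2), the inclusion $C_s^{OA}(\mu_s,q_s)\subseteq\mu_s$ is automatic, so I would show every $i\in\mu_s$ is re-selected. First I would dispose of the case $|\mu_s|\le q_s^o$, in which Stage~1 places all of $\mu_s$ into the open category. When $|\mu_s|>q_s^o$, Stage~1 selects the top $q_s^o$ of $\mu_s$, a set $S_1$ that contains all of $O$ (since $|O|\le q_s^o$ and $O$ is the top block). The survivors $\mu_s\setminus S_1$ are then all reserve assignees, and for each $r$ the $r$-eligible survivors are contained in $\{j:(j,r)\in\eta(s)\}$, a set of size at most $q_s^r$; hence Stage~2 re-selects every one of them up to the intact capacity $q_s^r$. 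Thus $C_s^{OA}(\mu_s,q_s)=\mu_s$. This step uses only the over-and-above principle and the feasibility bounds on $\eta$ (together with the convention that an individual matched to $s$ is $s$-acceptable, so the choice rule does not discard her); neither within-category fairness nor non-wastefulness is needed here.

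Condition (3) is where the real argument lives, and I would prove it by contradiction. Suppose $(i,s)$ satisfies $sP_i\mu_i$ and $i\in C_s^{OA}(\mu_s\cup\{i\},q_s)$. Non-wastefulness forces every category $i$ is eligible for at $s$ to be exhausted: open is full, so $|O|=q_s^o$, and if $t(i)=r\in\mathcal{R}$ then $\{j:(j,r)\in\eta(s)\}$ has exactly $q_s^r$ members; within-category fairness forces every open assignee and every $t(i)$-assignee to strictly beat $i$. Since the $q_s^o$ members of $O$ all beat $i$, individual $i$ is not among the top $q_s^o$ of $\mu_s\cup\{i\}$, so Stage~1 selects exactly $O$, $i$ survives to Stage~2, and no reserve assignee has been displaced into the open category. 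If $t(i)=GC$ this already closes the argument, since $i$ is eligible for no reserve category. If $t(i)=r\in\mathcal{R}$, the $r$-eligible survivors are exactly the $q_s^r$ members of $\{j:(j,r)\in\eta(s)\}$ together with $i$, and since all $q_s^r$ of them beat $i$, the top-$q_s^r$ selection in Stage~2 again excludes $i$. Either way $i\notin C_s^{OA}(\mu_s\cup\{i\},q_s)$, a contradiction.

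I expect the main obstacle to be the Stage-2 bookkeeping in condition (3): the worry is that adding $i$ and recomputing Stage~1 could free a seat in category $t(i)$ and let $i$ slip in. The over-and-above principle is exactly what rules this out, because the open block has size precisely $q_s^o$ and sits strictly above everyone else, so Stage~1 removes that block and nothing more, leaving category $t(i)$ full of individuals who all outrank $i$. Tracking which set each individual is re-selected into (open versus reserve) across the recomputation, and confirming that no reserve capacity is loosened, is the delicate part; the remainder is routine.
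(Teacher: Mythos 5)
Your proof is correct, but it takes a genuinely different route from the paper's: the paper does not argue the lemma directly at all --- its entire proof is the one-line citation ``Lemma 2 of Ayg\"{u}n and Turhan (2022) imply Lemma 1,'' deferring to a result established in that companion paper's model of backward transfers. You instead give a self-contained verification inside the present model, and your decomposition makes visible exactly which axiom powers which condition: individual rationality gives (1); the over-and-above principle alone pins down the open-category assignees $O=\left\{ j\mid\left(j,o\right)\in\eta(s)\right\} $ as a top block of $\mu_{s}$ of size at most $q_{s}^{o}$, which yields $C_{s}^{OA}(\mu_{s},q_{s})=\mu_{s}$ in (2); and non-wastefulness plus within-category fairness, fed through that same top-block observation, defeat any would-be blocking pair in (3) --- in particular, your check that recomputing Stage 1 on $\mu_{s}\cup\{i\}$ selects \emph{exactly} $O$, so that no reserve seat in category $t(i)$ is freed up, is precisely the step that a bare citation glosses over. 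What the paper's approach buys is brevity and consistency with its other appeals to Ayg\"{u}n and Turhan (2022) (e.g., for immunity to membership misreporting); what yours buys is a proof readable within this paper alone, at the price of two interpretive choices you rightly flag: you read non-wastefulness as also exhausting the open category (the paper's formal clause quantifies only over $t(i)\in\mathcal{R}$, though its verbal statement supports your reading), and you invoke the convention that anyone matched to $s$ is $\succ_{s}$-acceptable, since the paper's one-sided notion of individual rationality does not by itself prevent $C_{s}^{OA}$ from discarding an unacceptable matched individual in (2). Both of these are gaps in the paper's formalism rather than in your argument, and your proof would be a sound replacement for the citation.
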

\begin{proof}
Lemma 2 of Aygün and Turhan (2022) imply Lemma 1. 
\end{proof}
Incentive-compatibility of $\varphi$ implies that $\varphi$ is strategy-proof.
Lemma 2 states that $\varphi$ is a stable mechanism with respect
to $C=\left(C_{s}^{OA}\right)_{s\in\mathcal{S}}$. Because $C_{s}^{OA}$
is substitutable and size monotonic, by Hatfield, Kominers, and Westkamp
(2021), $C_{s}^{OA}$ satisfies observable substitutability, observable
size monotonicity, and non-manipulability via contractual terms conditions.
Then, by their Theorem 4 , DA-OA is the unique stable and strategy-proof
mechanism. Hence, $\varphi=DA-OA$. 

\subsection*{Informal Characterization in Sönmez and Yenmez (2022)}

In Section 2.2. of their paper, Sönmez and Yenmez (2022) present a
characterization of the over-and-above choice rule verbally as follows: 
\begin{quote}
\emph{In the absence of horizontal reservations, which will be introduced
in Section 2.3, the following three principles mandated in Indra Shawney
(1992) uniquely define a choice rule, thus making the implementation
of VR policies straightforward. First, an allocation must respect
inter se merit: Given two individuals from the same category, if the
lower merit-score individual is awarded a position, then the higher
merit-score individual must also be awarded a position}\textbf{\emph{.
}}\emph{Next,}\textbf{\emph{ VR protected positions must be allocated
on an ``over-and-above'' basis; that is, positions that can be received
without invoking the VR protections do not count against VR-protected
positions.}}\emph{ Finally, subject to eligibility requirements, all
positions have to be filled without contradicting the two principles
above. }\textbf{\emph{It is easy to see that these three principles
uniquely imply the following choice rule}}\emph{: First, individuals
with the highest merit scores are assigned the open-category positions.
Next, positions reserved for the reserve-eligible categories are assigned
to the remaining members of these categories, again based on their
merit scores. We refer to this choice rule as the over-and-above choice
rule.}
\end{quote}
The verbal characterization above is erroneous, given how the over-and-above
implementation of VR-protected positions is stated. As written, the
axiom does not specify that individuals assigned to open-category
positions must have higher scores than those assigned to reserved-category
positions, even though this may be the interpretation the authors
may have in mind, as remarks in this direction can be seen in their
introduction. However, without formally defining the axioms, both
directions of the characterization statement fail to hold.\footnote{More importantly, subsequent papers, such as Sönmez and Unver (2022),
state that the characterization of the over-and-above choice rule
in the absence of horizontal reservations was first formulated by
Sönmez and Yenmez (2022) but do not acknowledge the error in the characterization
as axioms are stated. The November 2022 version of Sönmez and Unver
(2022) formulates the ``compliance with the VR protections\textquotedbl{}
property in the absence of horizontal reservations and re-state the
characterization, but does not provide a proof. It can be accessed
at https://arxiv.org/pdf/2210.10166.pdf. Note that the characterization
statement in Proposition 0 of Sönmez and Unver (2022) is different
than the one in Sönmez and Yenmez (2022), given how the axioms are
stated in the latter. Sönmez and Unver (2022)'s re-formulation of
the axiom is identical to Aygün and Turhan (2022)'s formulation of
the \emph{over-and-above principle}.} 

As stated, the over-and-above implementation of VR protection property
is written as follows: For any set $A\subseteq\mathcal{I}$ and different
categories $r$ and $r^{'}$, $C^{r}\left(A\right)\cap C^{r^{'}}\left(A\right)=\emptyset$,
and the number of reserved category positions are independent of $C^{o}\left(A\right)$. 

Therefore, as written, this property does not require that \emph{positions
without invoking VR protections} must be acquired by candidates with
high merit scores. To see why this informal characterization does
not hold, consider the following examples.
\begin{example}
Consider an institution with two positions: one open category and
one category $r$. There are two individuals, $i$ and $j$. They
are both members of category $r$ and $i$ has a higher score than
$j$. According to $C^{OA}$, the high-scoring individual will be
chosen in the open category, and the low-scoring individual will be
chosen in category $r$. That is, 
\[
C^{OA}\left(\left\{ i,j\right\} \right)=\left(C^{o}\left(\left\{ i,j\right\} \right)=\left\{ i\right\} ,C^{r}\left(\left\{ i,j\right\} \right)=\left\{ j\right\} \right).
\]
\end{example}
Consider a choice rule $\widetilde{C}$ such that $\widetilde{C}\left(A\right)=C^{OA}\left(A\right)$
for all $A\in2^{\mathcal{I}}\setminus\left\{ i,j\right\} $. The rule
$\widetilde{C}$ is such that 
\[
\widetilde{C}\left(\left\{ i,j\right\} \right)=\left(C^{o}\left(\left\{ i,j\right\} \right)=\left\{ j\right\} ,C^{r}\left(\left\{ i,j\right\} \right)=\left\{ i\right\} \right).
\]
This choice rule satisfies the three axioms-{}-{}-as stated-{}-{}-in
Sönmez and Yenmez (2022). It trivially satisfies the first condition
because both individuals are awarded a position. It satisfies the
second axiom because $j$ being chosen in the open category does not
affect the number of positions in category $r$, so $i$ could be
chosen in category $r$. Finally, all positions are filled. 

The next example also shows that the ``if'' direction fails.
\begin{example}
Consider an institution with two positions: one open category and
one category $r$. There are two individuals, $i$ and $j$. Individual
$i$ is a member of category $r$ and $j$ is a general category member.
Suppose $i$ has a higher score than $j$. According to $C^{OA}$,
the high-scoring individual $i$ will be chosen in open-category,
and the low-scoring individual $j$ will be unassigned. That is, 
\[
C^{OA}\left(\left\{ i,j\right\} \right)=\left(C^{o}\left(\left\{ i,j\right\} \right)=\left\{ i\right\} ,C^{r}\left(\left\{ i,j\right\} \right)=\emptyset\right).
\]
 This choice rule violates the third axiom as stated in the following
sense: Assigning $j$ to open-category and $i$ to reserved category
$r$ positions complies with the first two axioms as stated and fills
both positions. Since both $i$ and $j$ receive positions, the first
axiom trivially holds. Assigning an individual to the open-category
does not change the number of available positions in category $r$.
Hence, the second axiom also holds. 
\end{example}
Thus, given how the axioms are stated, the over-and-above choice rule's
characterization with these axioms does not hold. The next example
shows that the stated axioms in Sönmez and Yenmez (2022) may induce
a choice rule that leaves the highest-scoring individual unassigned. 
\begin{example}
Suppose there are three individuals, $i$, $j$, and $k$. The institution
has two positions. One is an open-category, and the other is a reserved
category-$r$ position. The highest-scoring individual, $i$, is a
member of general category. Individuals $j$ and $k$ are members
of the reserved category $r$ and $j$ has a higher merit score than
$k$. Consider a choice rule $\widetilde{C}$ such that $\widetilde{C}\left(A\right)=C^{OA}\left(A\right)$
for all $A\in2^{\mathcal{I}}\setminus\left\{ i,j,k\right\} $. The
rule $\widetilde{C}$ is such that 
\[
\widetilde{C}\left(\left\{ i,j,k\right\} \right)=\left(C^{o}\left(\left\{ i,j,k\right\} \right)=\left\{ j\right\} ,C^{r}\left(\left\{ i,j,k\right\} \right)=\left\{ k\right\} \right).
\]
 This allocation satisfies all three axioms in Sönmez and Yenmez (2022).
Note that this allocation leaves the highest-scoring individual unassigned. 
\end{example}

\end{document}